\documentclass[12pt]{amsart}
\usepackage{amsmath,amssymb,amsthm,mathrsfs,eucal}
\usepackage{graphicx}
\usepackage[american]{babel}
\usepackage{a4wide}

\newtheorem{theorem}{THEOREM}
\newtheorem{assumption}{ASSUMPTION}

\usepackage{enumitem}
\newenvironment{enumproof}{
	
	\begin{enumerate}[leftmargin=0cm,itemindent=1cm,itemsep=2mm]
	}
{ 	\end{enumerate} }

\newcommand{\lint}{\int\limits}
\newcommand{\Lip}[1]{\textnormal{Lip}(#1)}
\newcommand{\R}{\mathbb{R}}
\renewcommand{\u}{\mathbf{u}}
\renewcommand{\v}{\mathbf{v}}
\newcommand{\T}{\mathbb{T}}
\newcommand{\bphi}{\boldsymbol{\phi}}
\newcommand{\bPhi}{\boldsymbol{\Phi}}
\newcommand{\bPsi}{\boldsymbol{\Psi}}
\newcommand{\A}{\CMcal{A}}

\graphicspath{{./}{figures/}}

\title{Multiphase modeling of tumor growth with matrix remodeling and fibrosis}
\author{Andrea Tosin}
\address{Department of Mathematics \\
		Politecnico di Torino \\
		Corso Duca degli Abruzzi 24, 10129 Torino, Italy}
\email{andrea.tosin@polito.it}
\thanks{A. Tosin acknowledges the support of a fellowship by the National Institute for Advanced Mathematics (INdAM) and the ``Compagnia di San Paolo'' foundation.}
\author{Luigi Preziosi}
\address{Department of Mathematics \\
		Politecnico di Torino \\
		Corso Duca degli Abruzzi 24, 10129 Torino, Italy}
\email{luigi.preziosi@polito.it}

\begin{document}

\begin{abstract}
We present a multiphase mathematical model for tumor growth which incorporates the remodeling of the extracellular matrix and describes the formation of fibrotic tissue by tumor cells. We also detail a full qualitative analysis of the spatially homogeneous problem, and study the equilibria of the system in order to characterize the conditions under which fibrosis may occur.
\end{abstract}

\keywords{Multiphase models, remodeling, fibrosis, nonlinear equations}
\subjclass[2000]{92B05, 92C37, 34C60}

\maketitle

\section{Introduction}
It is well known that tumor tissues are often stiffer than normal tissues. For instance, a normal mammary gland has an elastic modulus of about $2$ hPa, which may dramatically increase for a breast tumor up to about $4$ kPa \cite{paszek2005tensional}. For this reason self-palpation is often a successful tool of pre-diagnosis for the detection of possible stiffer nodules, therefore so encouraged. In most cases, the increased stiffness is due to the presence of a denser and more fibrous stroma \cite{butcher2009tense,kass2007mammary,takeuchi1976variation} coming from a considerable change in the content of ExtraCellular Matrix (ECM). Indeed, as reported in \cite{paszek2005tensional}, doubling the percent amount of collagen would increase the stiffness of a tissue by almost one order of magnitude ($328$ Pa and $1589$ Pa for a $2$ mg/ml and a $4$ mg/ml collagen mixture, respectively). The percentage of ECM also changes within the same tumor type during tumor progression \cite{zhang2003characteristics}.

The continuous remodeling of ECM is a physiologically functional process, because it allows to keep the stroma young and reactive. In fact, prolonged rest is detrimental for bones and muscles, while physical training has an opposite effect. The ECM is constantly renewed through the concomitant production of Matrix
MetalloProteinases (MMPs) and new ECM components. In stationary conditions, remodeling of ECM is a slow process: for instance, in human lungs the physiological turnover of ECM is $10$ to $15\%$ per day \cite{johnson2001role}, which leads to an estimated complete turnover in a period of nearly one week. However, when a new tissue has to be formed, e.g. to heal a wound, the rate of production is one or two order of magnitude faster \cite{chiquet1996regulation,dejana1990fibrinogen}.  It is also well known that the remodeling process is strongly affected by the stress and the strain the tissue undergoes, as clearly occurs for bones, teeth, and muscles \cite{kim2002gene,kjaer2004role,mao2004growth}. Hence, the relation between the rate of production/degradation of ECM constituents and the pressure felt by the cells is rather complicated.

Increased presence of ECM does not characterize only many tumors but was also observed in other pathologies like intima hyperplasia, cardiac, liver, and pulmonary fibrosis, asthma, colon cancer \cite{berk2007ecm,brewster1990myofibroblasts,iredale2007models,johnson2001role,liotta2001microenvironment,pinzani2000liver}. The alteration in the ECM composition can be due to several probably concurring reasons, including increased synthesis of ECM proteins, decreased activity of MMPs, upregulation of Tissue-specific Inhibitors of MetalloProteinases (TIMPs).

The interaction between ECM and cells is also attracting the attention of many research for other reasons. Indeed, on the one hand cells must adhere properly in order to survive, a phenomenon called \emph{anoikis}, as well as be anchored to the ECM to undergo mithosis. On the other hand, the interaction with the stroma has been argued to be one of the causes of tumor progression \cite{butcher2009tense,hautekeete1997hepatic,
kass2007mammary,liotta2001microenvironment,ruiter2002melanoma}.

In this paper we propose a general multiphase mathematical model able to describe the formation of fibrosis through either excessive production of ECM or underexpression of MMPs. The model is based on the frameworks deduced in \cite{chaplain2006mml,MR2471305}, taking also cell-ECM adhesion into account. In particular, ECM is regarded as a rigid scaffold while the cell populations (tumor and healthy cells) are assumed to behave similarly to elastic fluids. More realistic constitutive models, taking cell-cell adhesion into account and comparing theoretical and experimental results, can be found in \cite{ambrosi2008cam,preziosi2009evp}. For the sake of conciseness, we refrain from citing here all papers dealing with multiphase models of tumor growth, and refer to the recent reviews \cite{MR2253816,preziosi2009mmt,tracqui2009bmt} for more references. 

In more detail, Sect. \ref{sect:multiphase} derives and describes the model, which is then studied from the qualitative point of view in Sect. \ref{sect:spathomog}, having in mind the general dependence on the parameters stemming from biology. Existence, uniqueness, and continuous dependence of the solution on the initial data is proved in the spatially homogeneous case, and equilibrium configurations are discussed. These theoretical investigations reveal several interesting features of the model, for instance the fact that it predicts no other equilibria but the fully physiological and the fully pathological ones, featuring no tumor cells and no healthy cells, respectively. The physiological equilibrium turns out to be stable in the manifold with no tumor cells, but becomes unstable as soon as few tumor cells are present, which trigger the formation of fibrotic tissue.

\section{Multiphase modeling: general picture and particular cases}
\label{sect:multiphase}
In the multiphase modeling approach, tumors are regarded as a mixture of several interacting components whose main state variables are the volume ratios, i.e., their percent amounts within the mixture. With a view to providing a simplified, though still realistic, description of the system, we confine our attention to two cell populations: tumor cells, with volume ratio $\phi_T$, and healthy cells, with volume ratio $\phi_H$, moving within a remodeling extracellular matrix with volume ratio $\phi_M$. Clearly, $0\leq\phi_\alpha\leq 1$ for all $\alpha=T,\,H,\,M$.

\subsubsection*{Balance equations for the cellular matter}
\label{sect:celleq}
Following \cite{MR2471305}, we obtain the main governing equations for the cellular matter by joining the mass balance equation and the corresponding balance of linear momentum (with inertial effects neglected):
\begin{equation}
	\frac{\partial\phi_\alpha}{\partial t}-\nabla\cdot\left(\phi_\alpha\left(\frac{\phi_\alpha}{\phi}
		-\frac{\sigma_{\alpha M}}{\vert\nabla{(\phi\Sigma(\phi))}\vert}\right)^+\mathbb{K}_{\alpha M}
			\nabla{(\phi\Sigma(\phi))}\right)=\Gamma_\alpha\phi_\alpha,
	\label{eq:celleq}
\end{equation}
where $\phi:=\phi_T+\phi_H$, $\Gamma_\alpha$ is the duplication/death rate, and $\mathbb{K}_{\alpha M}$ the cell motility tensor within the matrix. Cells are regarded as elastic balloons forming an isotropic fluid, and are assumed to feature equal mechanical properties, hence their stress tensor is $\T=-\Sigma(\phi)\mathbb{I}$ for a pressure-like function $\Sigma$. In addition, the model accounts for the attachment/detachment of the cells to/from the matrix by means of a stress threshold $\sigma_{\alpha M}\geq 0$, which switches cell velocity on or off according to the magnitude of the actual stress sustained by cells in interaction with the matrix (see \cite{MR2471305} for further details). 

In the application to matrix remodeling and fibrosis, we consider that cells duplicate and die mainly on the basis of the amount of matrix present in the mixture. In general, also the availability of some nutrients plays a major role, but in the present context we assume that they are always abundantly supplied to the cells. Specifically, we set
\begin{equation}
	\Gamma_\alpha=\gamma_\alpha(\phi_M)H_{\epsilon_\alpha}(\psi_\alpha-\psi)-\delta_\alpha
		-\delta'_\alpha H_{\epsilon_M}(m_\alpha-\phi_M),
	\label{eq:Gamma}
\end{equation}
where $\psi:=\phi_T+\phi_H+\phi_M$ is the overall volume ratio occupied by cells and matrix, and $\gamma_\alpha(\cdot)$ is the net growth rate of the cell population $\alpha$, tempered by the free space rate $H_{\epsilon_\alpha}$. In particular, the $H_{\epsilon_\alpha}$'s are functions bounded between $0$ and $1$, which vanish on $(-\infty,\,0)$ and equal $1$ on $(\epsilon_\alpha,\,+\infty)$ (further analytical details in Sect. \ref{sect:spathomog}, Assumption \ref{hp:param}). Cell growth is inhibited when the amount of free space locally available is too small ($\psi\geq\psi_\alpha$) with respect to a threshold $\psi_\alpha\in[0,\,1]$. At the same time, either apoptosis or anoikis can trigger cell death at rates $\delta_\alpha,\,\delta_\alpha'>0$, respectively, the latter taking place when a too small amount of ECM ($\phi_M\leq m_\alpha$) with respect to a given threshold $m_\alpha\in[0,\,1]$ results in an insufficient number of possible adhesion sites.

Usually $\gamma_T(\cdot)=\gamma_H(\cdot)$, $\delta_T=\delta_H$, $m_T=m_H$, $\epsilon_T=\epsilon_H=\epsilon_M$. Instead, a difference between $\psi_T$ and $\psi_H$, with $\psi_T>\psi_H$, may be related to a smaller sensitivity to contact inhibition clues by tumor cells \cite{chaplain2006mml}. On the whole, we notice that it must be
\begin{equation}
	\Gamma_T(\phi_M,\,\psi)>\Gamma_H(\phi_M,\,\psi), \qquad \forall\,(\phi_M,\,\psi)\in[0,\,1]\times[0,\,1]
	\label{eq:Gamma-ineq}
\end{equation}
which holds if: (i) $\delta'_T<\delta'_H$ (smaller sensitivity to anoikis by tumor cells), (ii) $\epsilon_T>\epsilon_H$ (different speed for the switch mechanism, e.g. because of a different uncertainty in the response to mechanical stimuli), (iii) $m_T<m_H$ (higher capability from tumor cells to escape anoikis by surviving a greater lack of adhesion sites).

\subsubsection*{Matrix remodeling}
In general, the ECM is a quite complicated fibrous me\-dium. For the sake of simplicity, we model it as a rigid scaffold, which makes it unnecessary to detail its stress tensor because the internal stress is indeterminate due to the rigidity constraint. Under this assumption, the evolution equation for the volume ratio $\phi_M$ reads
\begin{equation}
	\frac{\partial\phi_M}{\partial t}=\Gamma_M,
	\label{eq:ecmeq}
\end{equation}
where $\Gamma_M$ is the source/sink of ECM accounting for remodeling and degradation due to the motion of the cells within the scaffold. Notice that, in general, $\phi_M$ depends on both time $t$ and space $x$, although the latter acts mainly as a parameter in the above differential equation.

Matrix is globally remodeled by cells and degraded by MMPs, whose concentration per unit volume is denoted by $e=e(t,\,x)$:
\begin{equation}
	\Gamma_M=\sum_{\alpha=T,\,H}\mu_\alpha(\phi_M)H_{\epsilon_M}(\psi_M-\psi)\phi_\alpha-\nu e\phi_M.
	\label{eq:GammaM-1}
\end{equation}
Here, $\mu_\alpha$ is a nonnegative, nonincreasing function (cf. Sect. \ref{sect:spathomog}, Assumption \ref{hp:param}) representing the net matrix production rate by the cell population $\alpha$ tempered by the free space rate $H_{\epsilon_M}$, and $\nu>0$ is the degradation rate by the enzymes. As usual, the latter are not regarded as a constituent of the mixture, but rather as massless macromolecules diffusing in the extracellular fluid according to a reaction-diffusion equation: $e_t=D\Delta{e}+\sum_{\alpha=T,\,H}\pi_\alpha\phi_\alpha-e/\tau$,
for net production rates $\pi_\alpha>0$ and enzyme half-life $\tau>0$. Actually, enzyme dynamics is much faster than that involving cell growth and death, hence it is possible to work under a quasi-stationary approximation. Furthermore enzyme action is usually very local \cite{barker2000cws}, so that also diffusion can be neglected and finally $e=\tau\sum_{\alpha=T,\,H}\pi_\alpha\phi_\alpha.$
Inserting this expression into Eq. \eqref{eq:GammaM-1} and defining $\nu_\alpha:=\nu\tau\pi_\alpha$ ultimately yields
\begin{equation*}
	\Gamma_M=\sum_{\alpha=T,\,H}\left(\mu_\alpha(\phi_M)H_{\epsilon_M}(\psi_M-\psi)
		-\nu_\alpha\phi_M\right)\phi_\alpha.
	\label{eq:GammaM-2}
\end{equation*}
The pathological cases possibly leading to fibrosis are either $\mu_T(\cdot)>\mu_H(\cdot)$ or $\nu_T<\nu_H$, which imply that tumor cells produce either more ECM or less MMPs than healthy cells, respectively.

\section{The spatially homogeneous problem}
\label{sect:spathomog}
The spatially homogeneous problem describes the evolution of the system under the main assumption of absence of spatial variation of the state variables $\phi_T$, $\phi_H$, $\phi_M$. This allows in particular to describe the equilibria, and the related basins of attraction, as functions of the parameters of the model.

In the sequel we will be concerned with the following initial value problem:
\begin{equation}
	\left\{
	\begin{array}{rcll}
	 	\dfrac{d\phi_\alpha}{dt} & = & \left[\gamma_\alpha(\phi_M)H_{\epsilon_\alpha}(\psi_\alpha-\psi)-\delta_\alpha
	 		-\delta_\alpha'H_{\epsilon_M}(m_\alpha-\phi_M)\right]\phi_\alpha, \quad \alpha=T,\,H \\[0.3cm]
		\dfrac{d\phi_M}{dt} & = & \displaystyle{\sum_{\alpha=T,\,H}}
			(\mu_\alpha(\phi_M)H_{\epsilon_M}(\psi_M-\psi)-\nu_\alpha\phi_M)\phi_\alpha \\[0.6cm]
		\phi_\alpha(0) & = & \phi_\alpha^0\in[0,\,1], \quad \alpha=T,\,H,\,M
	\end{array}
	\right.
	\label{eq:ode}
\end{equation}
over a time interval $(0,\,T]$, $T>0$. Some preliminary technical assumptions are in order:
\begin{assumption}
For $\alpha=T,\,H,\,M$ as appropriate, we assume $0\leq\psi_\alpha,\,m_\alpha\leq 1$, $\delta_\alpha,\,\delta_\alpha',\,\nu_\alpha>0$, and in addition that $\gamma_\alpha,\,\mu_\alpha:[0,\,1]\to\R_+$ are Lipschitz continuous, with $\gamma_\alpha$ nondecreasing, $\gamma_\alpha(0)=0$, and $\mu_\alpha$ nonincreasing.

Moreover, we assume that the functions $H_{\epsilon_\alpha}:\R\to [0,\,1]$ are Lipschitz continuous and vanishing on $(-\infty,\,0]$.
\label{hp:param}
\end{assumption}

The monotonicity of $\gamma_\alpha,\,\mu_\alpha$ is dictated by the fact that cell proliferation is fostered by the presence of ECM, whereas production of new matrix is progressively inhibited by the accumulation of other matrix. Owing to the properties recalled in Assumption \ref{hp:param}, $\gamma_\alpha$, $\mu_\alpha$, and $H_{\epsilon_\alpha}$ satisfy
\begin{gather}
	\gamma_\alpha(s)\leq\Lip{\gamma_\alpha}s, \ \gamma_\alpha(s)\leq\gamma_\alpha(1), \
		\mu_\alpha(1)\leq\mu_\alpha(s)\leq\mu_\alpha(0), \ \forall\,s\in[0,\,1],
	\label{eq:propgm}
	\\
	H_{\epsilon_\alpha}(s-\beta)\leq\Lip{H_{\epsilon_\alpha}}\vert s\vert, \quad
		\forall\,s\in\R,\,\beta\geq 0.
	\label{eq:propH}
\end{gather}
The functions $H_{\epsilon_\alpha}$ may be taken to be mollifications of the Heaviside function, for instance $H_{\epsilon_\alpha}(s)=0$ if $s<0$, $H_{\epsilon_\alpha}(s)=\epsilon_\alpha^{-1}s$ if $0\leq s\leq\epsilon_\alpha$, and $H_{\epsilon_\alpha}(s)=1$ if $s>\epsilon_\alpha$, or even smoother.

Let us introduce the space $V^d:=C([0,\,T];\,\R^d)$ of continuous functions $\u:[0,\,T]\to\R^d$, endowed with the $\infty$-norm $\|\u\|_\infty=\max_{t\in[0,\,T]}\|\u(t)\|_1$. In proving our results we will utilize $d=3$ and $d=4$.

\subsubsection*{Well-posedness}
We start by studying existence, uniqueness, and continuous dependence on the data of the solution $\bphi=(\phi_T,\,\phi_H,\,\phi_M)$ to problem \eqref{eq:ode}. We will then also discuss its regularity. Since the $\phi_\alpha$'s are volume ratios, we are interested in nonnegative solutions such that $\psi(t)\leq 1$ for all $t\geq 0$.
\begin{theorem}[Existence, uniqueness, and continuous dependence]
\label{theo:wellpos}
For each initial datum $\bphi^0\geq 0$ with $\|\bphi^0\|_1\leq 1$, problem \eqref{eq:ode} admits a unique nonnegative global solution $\bphi\in C([0,\,+\infty);\,\R^3)$ such that $\|\bphi\|_\infty\leq 1$. In addition, if $\bphi_1,\,\bphi_2$ are the solutions corresponding to initial data $\bphi_1^0,\,\bphi_2^0$, then for each $T>0$ there exists a constant $C=C(T)>0$ such that
$$ \|\bphi_2-\bphi_1\|_\infty\leq C(T)\|\bphi_2^0-\bphi_1^0\|_1 $$
in the interval $[0,\,T]$.
\end{theorem}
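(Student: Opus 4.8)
The plan is to treat \eqref{eq:ode} as an autonomous system $\dot{\bphi}=\mathbf F(\bphi)$ in $\R^3$ and proceed in three steps: local well-posedness by a fixed-point argument, invariance of the biologically relevant region, and a Gr\"onwall estimate yielding both globalization and continuous dependence. First I would extend $\gamma_\alpha,\mu_\alpha$ from $[0,1]$ to all of $\R$ by constants (equal to $\gamma_\alpha(0)=0$, $\mu_\alpha(0)$ on $(-\infty,0)$ and to $\gamma_\alpha(1)$, $\mu_\alpha(1)$ on $(1,\infty)$), which preserves Lipschitz continuity and monotonicity and makes $\mathbf F$ globally defined, avoiding circularity between the a priori bounds and the domain of definition. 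Each component of $\mathbf F$ is a finite sum of products of factors that are bounded and Lipschitz on bounded sets --- the compositions $\gamma_\alpha(\phi_M)$, $\mu_\alpha(\phi_M)$, $H_{\epsilon_\alpha}(\psi_\alpha-\psi)$, $H_{\epsilon_M}(\psi_M-\psi)$ (affine maps composed with Lipschitz functions) and the polynomial factors $\phi_\alpha,\phi_M$ --- so a product-rule/triangle-inequality estimate based on \eqref{eq:propgm}--\eqref{eq:propH} shows $\mathbf F$ is Lipschitz on every ball, with an explicit constant $L$ on $\mathcal S:=\{\v\in\R^3:\v\geq 0,\ \|\v\|_1\leq 1\}$. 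Local existence and uniqueness then follow from Picard--Lindel\"of, or equivalently from the Banach fixed-point theorem applied to $(\mathcal T\bphi)(t)=\bphi^0+\int_0^t\mathbf F(\bphi(s))\,ds$ on $V^3$ with the Bielecki-weighted norm $\|\bphi\|_\lambda=\sup_{[0,T]}e^{-\lambda t}\|\bphi(t)\|_1$, $\lambda>L$, which contracts on any prescribed interval $[0,T]$.

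Next I would prove that $\mathcal S$ is forward invariant. For $\alpha=T,H$ the equation reads $\dot\phi_\alpha=g_\alpha(\bphi)\phi_\alpha$ with $g_\alpha$ continuous along the solution, so $\phi_\alpha(t)=\phi_\alpha^0\exp\!\big(\int_0^t g_\alpha(\bphi(s))\,ds\big)\geq 0$; then, using $\phi_T,\phi_H\geq 0$ and $\mu_\alpha,H_{\epsilon_M}\geq 0$, the matrix equation gives $\dot\phi_M\geq-\big(\sum_\alpha\nu_\alpha\phi_\alpha\big)\phi_M$, whence $\phi_M(t)\geq\phi_M^0\exp\!\big(-\int_0^t\sum_\alpha\nu_\alpha\phi_\alpha\big)\geq 0$. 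For the constraint $\psi:=\phi_T+\phi_H+\phi_M\leq 1$, the crucial observation is that whenever $\psi(t)\geq 1$ one has $\psi_\alpha-\psi\leq 0$ and $\psi_M-\psi\leq 0$ --- this is exactly where $\psi_\alpha,\psi_M\leq 1$ from Assumption \ref{hp:param} enters --- so \emph{all} the $H$-terms vanish and
$$ \dot\psi=-\sum_{\alpha=T,H}\Big(\delta_\alpha+\delta_\alpha'H_{\epsilon_M}(m_\alpha-\phi_M)+\nu_\alpha\phi_M\Big)\phi_\alpha\leq 0 . $$
Hence $\tfrac{d}{dt}\big[\tfrac12((\psi-1)^+)^2\big]=(\psi-1)^+\dot\psi\leq 0$ a.e., and since this quantity is zero at $t=0$ it remains zero; thus $\psi(t)\leq 1$ and $\bphi(t)\in\mathcal S$ as long as the solution exists.

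Since $\mathcal S$ is compact and invariant, the maximal solution cannot leave a compact set and so extends to $[0,+\infty)$ by the standard continuation criterion, with $\|\bphi\|_\infty\leq 1$ on every $[0,T]$; on $\mathcal S$ the modified vector field agrees with the original one, so $\bphi$ indeed solves \eqref{eq:ode}. For continuous dependence, let $\bphi_1,\bphi_2$ solve \eqref{eq:ode} from data $\bphi_1^0,\bphi_2^0$; both take values in $\mathcal S$, where $\mathbf F$ is $L$-Lipschitz, so $\mathbf w:=\bphi_2-\bphi_1$ obeys $\|\mathbf w(t)\|_1\leq\|\bphi_2^0-\bphi_1^0\|_1+L\int_0^t\|\mathbf w(s)\|_1\,ds$, and Gr\"onwall yields $\|\mathbf w(t)\|_1\leq e^{Lt}\|\bphi_2^0-\bphi_1^0\|_1$, i.e. the claim with $C(T)=e^{LT}$ on $[0,T]$; uniqueness is the special case $\bphi_1^0=\bphi_2^0$. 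I expect the invariance step --- the simultaneous positivity of the three components together with the sharp bound $\psi\leq 1$, both of which hinge on the sign structure of the source terms and on the $H_{\epsilon_\alpha}$ vanishing on $(-\infty,0]$ --- to be the genuine difficulty, while extracting the explicit Lipschitz constant $L$ on $\mathcal S$ from the various product terms is the most computation-heavy yet conceptually routine part.
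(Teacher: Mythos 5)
Your proof is correct, but it reaches the conclusion by a route that differs from the paper's in how the constraints $\bphi\geq 0$, $\psi\leq 1$ are enforced. The paper appends a fourth unknown $\varphi=1-\psi$ (the free space), rescales by $e^{-\lambda t}$, and runs a single Banach fixed-point argument in the set $\A=\{\u\geq 0,\ \|\u(t)\|_1=e^{\lambda t}\}$: positivity of all four components is obtained by choosing $\lambda$ large, using the quantitative bound \eqref{eq:propH} $H_{\epsilon_\alpha}(s-\beta)\leq\Lip{H_{\epsilon_\alpha}}|s|$ to show that the growth terms are dominated by $\varphi$ itself, so that $G$ maps $\A$ into itself; the solution is then prolonged inductively from a small-time contraction. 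You instead separate existence from invariance: standard Picard--Lindel\"of for the (extended) locally Lipschitz field, then an a posteriori proof that the simplex $\mathcal S$ is forward invariant --- nonnegativity from the multiplicative structure of the equations, and $\psi\leq 1$ from the observation that all $H$-terms shut off once $\psi\geq 1$ because $\psi_\alpha,\psi_M\leq 1$ and $H$ vanishes on $(-\infty,0]$. This is the same mechanism the paper exploits, but you use it only qualitatively; your version is more elementary (no fourth component, no exponential normalization) and yields the cleaner constant $C(T)=e^{LT}$, while the paper's version packages existence and the saturation constraint $\|\bPhi(t)\|_1=1$ into one invariant set. One small caveat: your claim that the Bielecki-weighted fixed-point map ``contracts on any prescribed interval $[0,T]$'' on all of $V^3$ tacitly assumes a \emph{global} Lipschitz bound on $\mathbf F$, which the quadratic products $\phi_\alpha\cdot(\dots)$ do not provide after the constant extension; this is harmless because local existence plus the continuation criterion through the compact invariant set $\mathcal S$ already gives the global solution, but the fixed-point step should be stated on a ball (or one should simply cite local Picard--Lindel\"of).
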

\begin{proof}
\begin{enumproof}
\item Let us introduce the function $\varphi(t):=1-\psi(t)$ (which, in mixture theory, identifies the free space available to be filled by some extracellular fluid) and consider the auxiliary problem given by the set of equations \eqref{eq:ode} plus $\varphi'=-\sum_\alpha\phi_\alpha'$, along with $\varphi^0:=\varphi(0)=1-\sum_\alpha\phi_\alpha^0$. Clearly, a triple $(\phi_T,\,\phi_H,\,\phi_M)$ is a solution to problem \eqref{eq:ode} if and only if the quadruple $(\phi_T,\,\phi_H,\,\phi_M,\,\varphi)$ is a solution to the auxiliary problem.

\item We put the auxiliary problem in compact form as
\begin{equation}
	\begin{cases}
		\dfrac{d\bPhi}{dt}=J[\bPhi], \qquad t>0 \\[0.2cm]
		\bPhi(0)=\bPhi^0,
	\end{cases}
	\label{eq:ode-compact}
\end{equation}
where $\bPhi=(\bphi,\,\varphi)$ and $J:V^4\to V^4$ is given componentwise by the right-hand sides of the differential equations in \eqref{eq:ode} plus $J_\varphi=-\sum_\alpha J_\alpha$.

Next we make the substitution $\bPhi(t)=\bPsi(t)e^{-\lambda t}$, where $\lambda>0$ will be properly selected. Due to the specific expression of $J$, the term $J[\bPsi(t)e^{-\lambda t}]$ can be given the form $I[\bPsi](t)e^{-\lambda t}$ for a suitable operator $I:V^4\to V^4$, which allows us to rewrite problem \eqref{eq:ode-compact} in terms of $\bPsi$ as
\begin{equation*}
	\begin{cases}
		\dfrac{d\bPsi}{dt}=I[\bPsi]+\lambda\bPsi, \qquad t>0 \\[0.2cm]
		\bPsi(0)=\bPhi^0
	\end{cases}
\end{equation*}
or, in mild form, as $\bPsi(t)=\bPhi^0+\lint_0^t\Bigl(I[\bPsi](\tau)+\lambda\bPsi(\tau)\Bigr)\,d\tau=:G[\bPsi](t)$.

\item Let us look for a mild solution in the following set of admissible functions:
$$ \A=\{\u\in V^4\,:\,\u(t)\geq 0,\ \|\u(t)\|_1=e^{\lambda t}\ \text{for all\ } t\in[0,\,T]\}. $$
Notice that $\bPsi\in\A$ amounts in particular to $\phi_\alpha(t),\,\varphi(t)\geq 0$ with $\sum_\alpha\phi_\alpha(t)+\varphi(t)=1$ for all $t\in[0,\,T]$, thus $\sum_\alpha\phi_\alpha(t)\leq 1$, which is what the saturation constraint requires on the volume ratios of the constituents of the mixture.

Any mild solution of $\bPsi(t)=G[\bPsi](t)$ is a fixed point of the operator $G$, therefore the task is to show that $G$ admits a unique fixed point in $\A$.

\item Owing to Assumption \ref{hp:param} and properties \eqref{eq:propgm}, \eqref{eq:propH}, if $\u(t)\geq 0$ then
\begin{align*}
	G_\alpha[\u](t) &\geq \phi_\alpha^0+(\lambda-C_\alpha)\lint_0^t u_\alpha(\tau)\,d\tau
		\quad (C_{T,H}=\delta_{T,H}+\delta'_{T,H},\ C_M=\nu_T+\nu_H), \\
	G_\varphi[\u](t) &\geq \varphi^0+\left(\lambda-\sum_{\alpha=T,\,H}(\gamma_\alpha(1)\Lip{H_{\epsilon_\alpha}}
		+\mu_\alpha(0)\Lip{H_{\epsilon_M}})\right)\lint_0^tu_\varphi(\tau)\,d\tau,
\end{align*}
hence we can choose $\lambda>0$ so large that $G[\u](t)\geq 0$ as well. If in addition $\|\u(t)\|_1=e^{\lambda t}$ then, using $I_\varphi=-\sum_\alpha I_\alpha$, we discover $\|G[\u](t)\|_1=e^{\lambda t}$. In conclusion, $\u\in\A$ implies $G[\u]\in\A$, i.e., $G$ maps $\A$ into itself.

\item Take now $\u,\,\v\in\A$ and observe that
$$ \|G[\u](t)-G[\v](t)\|_1\leq\lint_0^t\left(\|I[\u](\tau)-I[\v](\tau)\|_1+
	\lambda\|\u(\tau)-\v(\tau)\|_1\right)\,d\tau. $$
Lipschitz continuity of $\gamma_\alpha,\,\mu_\alpha,\,H_{\epsilon_\alpha}$ along with $H_{\epsilon_\alpha}(s)\leq 1$ and properties \eqref{eq:propgm}, \eqref{eq:propH} imply that there exists $C>0$, independent of $T$, such that $\vert I_\alpha[\u](t)-I_\alpha[\v](t)\vert\leq C\|\u(t)-\v(t)\|_1$ each $\alpha=T,\,H,\,M$. Since $I_\varphi=-\sum_\alpha I_\alpha$, an analogous relationship holds true also for $I_\varphi$, hence finally $\|G[\u]-G[\v]\|_\infty\leq T(C+\lambda)\|\u-\v\|_\infty$, which proves that $G$ is Lipschitz continuous on $\A$.

\item From the above calculations we see that we can choose $T>0$ so small that $G$ be a contraction on $\A$. Since $\A$ is a closed subset of $V^4$, Banach Fixed Point Theorem asserts that $G$ has a unique fixed point $\bPsi\in\A$. Therefore, the auxiliary problem \eqref{eq:ode-compact} admits a unique nonnegative local solution $\bPhi\in V^4$ such that $\|\bPhi(t)\|_1=1$. The three first components of $\bPhi$ form the unique nonnegative solution $\bphi\in V^3$ to problem \eqref{eq:ode} with $\|\bphi(t)\|_1\leq 1$.

Next, taking $\bphi(T)$ as new initial condition and observing that it matches all the hypotheses satisfied by $\bphi^0$, we uniquely prolong $\bphi$ over the time interval $[T,\,2T]$ in such a way that $\bphi(t)\geq 0$ and $\|\bphi(t)\|_1\leq 1$ all $t\in[0,\,2T]$. Proceeding inductively, we ultimately end up with a unique nonnegative global solution $\bphi\in C([0,\,+\infty);\,\R^3)$, for which the estimate $\|\bphi\|_\infty\leq 1$ easily follows from $\|\bphi(t)\|_1\leq 1$ all $t\geq 0$.

\item Let now $\bPsi_1,\,\bPsi_2\in\A$ be the two mild solutions corresponding to initial data $\bPhi_1^0,\,\bPhi_2^0$. Using the previous estimates we discover
$$ \|\bPsi_2(t)-\bPsi_1(t)\|_1\leq\|\bPhi_2^0-\bPhi_1^0\|_1+
	(C+\lambda)\lint_0^t\|\bPsi_2(\tau)-\bPsi_1(\tau)\|_1\,d\tau, $$
whence, invoking Gronwall's inequality,
$$ \|\bPsi_2(t)-\bPsi_1(t)\|_1\leq\left[1+(C+\lambda)te^{(C+\lambda)t}\right]\|\bPhi_2^0-\bPhi_1^0\|_1. $$
Returning to $\bPhi_1,\,\bPhi_2$ and observing that $\|\bPhi_2^0-\bPhi_1^0\|_1\leq 2\|\bphi_2^0-\bphi_1^0\|_1$ we finally get the desired estimate of continuous dependence, after taking the maximum of both sides for $t\in[0,\,T]$. \qedhere
\end{enumproof}
\end{proof}

\begin{theorem}[Regularity]
If the functions $\gamma_\alpha,\,\mu_\alpha,\,H_{\epsilon_\alpha}$ are of class $C^k$ on $[0,\,1]$ then the solution $\bphi$ is of class $C^{k+1}$ on $[0,\,+\infty)$.
\end{theorem}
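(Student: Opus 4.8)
The plan is to run a standard bootstrapping argument on the integral (mild) formulation of the system. By Theorem \ref{theo:wellpos} we already know that $\bphi$ is a global $C^0$ solution with $\|\bphi\|_\infty\le 1$, so in particular $\phi_\alpha(t)\in[0,1]$ for all $t$ and each $\alpha$, which keeps us inside the domain $[0,1]$ where the hypotheses on $\gamma_\alpha$, $\mu_\alpha$, $H_{\epsilon_\alpha}$ are posed. The right-hand sides of \eqref{eq:ode} are built by composition, multiplication and summation from the functions $\gamma_\alpha(\phi_M)$, $\mu_\alpha(\phi_M)$, $H_{\epsilon_\alpha}(\psi_\alpha-\psi)$, $H_{\epsilon_M}(m_\alpha-\phi_M)$, $H_{\epsilon_M}(\psi_M-\psi)$, together with the components $\phi_\alpha$ themselves; call this vector field $F:[0,1]^3\to\R^3$, so that $\bphi'=F(\bphi)$. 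The key structural observation is that when $\gamma_\alpha,\mu_\alpha,H_{\epsilon_\alpha}\in C^k([0,1])$, the map $F$ is of class $C^k$ on $[0,1]^3$, since sums, products and compositions of $C^k$ maps are $C^k$.

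\medskip

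Next I would argue by induction on $j$ that $\bphi\in C^{j+1}([0,+\infty);\R^3)$ for $0\le j\le k$. The base case $j=0$: since $\bphi$ is continuous and $F$ is continuous, the map $t\mapsto F(\bphi(t))$ is continuous, and because $\bphi$ satisfies $\bphi(t)=\bphi^0+\int_0^t F(\bphi(\tau))\,d\tau$ (equivalently the mild formulation used in the proof of Theorem \ref{theo:wellpos}, stripped of the $e^{-\lambda t}$ rescaling), the fundamental theorem of calculus gives $\bphi\in C^1$ with $\bphi'(t)=F(\bphi(t))$. For the inductive step, suppose $\bphi\in C^{j+1}$ with $j+1\le k$. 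Then $t\mapsto F(\bphi(t))$ is the composition of the $C^k$ (hence $C^{j+1}$) map $F$ with the $C^{j+1}$ curve $\bphi$, so it is $C^{j+1}$; since $\bphi'=F(\bphi)$, this says $\bphi'\in C^{j+1}$, i.e. $\bphi\in C^{j+2}$. Iterating up to $j=k$ yields $\bphi\in C^{k+1}([0,+\infty);\R^3)$, which is the claim. Note that the argument is global in time precisely because Theorem \ref{theo:wellpos} already delivered the global solution staying in $[0,1]^3$.

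\medskip

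The only point requiring a little care is the regularity of the composition $F\circ\bphi$ in the inductive step; this is where I expect the main (though still routine) obstacle to lie. One must either invoke the chain rule in the form ``$g\circ h\in C^m$ whenever $g\in C^m$ and $h\in C^m$'' on the appropriate domains — here $h=\bphi:[0,+\infty)\to[0,1]^3$ and $g$ is one of the scalar building blocks $\gamma_\alpha,\mu_\alpha$ (one variable) or $H_{\epsilon_\alpha}$ composed with the affine maps $\psi_\alpha-\psi$, $m_\alpha-\phi_M$, $\psi_M-\psi$ — or expand $F'(\bphi(t))\bphi'(t)$ by the (multivariate) chain rule and check that each factor has one fewer derivative available than needed, which is exactly the book-keeping that makes the induction close. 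The affine inner maps contribute nothing to regularity, and the products $(\dots)\phi_\alpha$ are handled by the Leibniz rule, so no genuine difficulty arises beyond this composition lemma, which may simply be cited. One should also remark that $C^k$ up to the boundary of $[0,1]$ is exactly what is needed, since $\phi_M$ and $\psi$ can attain the endpoints $0$ and $1$; this is guaranteed by the hypothesis being stated on the closed interval $[0,1]$.
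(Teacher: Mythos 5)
Your proof is correct and takes essentially the same route as the paper: the paper likewise observes that continuity of $\bphi$ makes the right-hand side of \eqref{eq:ode} continuous, hence $\bphi\in C^1$, and then differentiates the ODEs $k$ times to close the induction. Your explicit appeal to the mild formulation and to the composition/Leibniz rules merely spells out the book-keeping that the paper leaves implicit.
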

\begin{proof}
According to Theorem \ref{theo:wellpos}, the $\phi_\alpha$'s are continuous on $[0,\,+\infty)$, therefore the right-hand sides of the differential equations in \eqref{eq:ode} define continuous functions on $[0,\,+\infty)$. It follows that the $\phi_\alpha'$'s are continuous as well, i.e., the solution $\bphi$ is actually $C^1$ on $[0,\,+\infty)$. If $\gamma_\alpha,\,\mu_\alpha,\,H_{\epsilon_\alpha}$ are of class $C^k$ then, by differentiating the ODEs in \eqref{eq:ode} $k$ times, this reasoning can be applied inductively to discover $\bphi\in C^{k+1}([0,\,+\infty);\,\R^3)$.
\end{proof}

\subsubsection*{Stability of the equilibrium configurations}
Next we study the equilibria of model \eqref{eq:ode}. It is immediately seen that $\phi_T=\phi_H=0$ gives rise to an equilibrium solution for any $\phi_M\in[0,\,1]$, corresponding to that all cells have died leaving some ECM. In order to investigate nontrivial equilibrium configurations, we proceed by considering first the two important sub-cases in which either $\phi_T=0$ or $\phi_H=0$ but $\phi_T,\,\phi_H$ do not vanish at the same time. The former will be referred to as the fully physiological case, the latter as the fully pathological one. In the following, $\phi_\alpha$ will be the nonzero volume ratio for either $\alpha=T$ or $\alpha=H$, meaning that the other one is identically zero. For the sake of simplicity, let us fix $\psi_M=1$ and examine the case $\phi_\alpha+\phi_M\leq 1-\eta$ for some arbitrarily small $\eta>0$, in such a way that, choosing $\epsilon_M<\eta$, we have the simplification $H_{\epsilon_M}(\psi_M-\phi_\alpha-\phi_M)=1$.

Suppose that the function $\mu_\alpha(s)-\nu_\alpha s$ has exactly one zero, say $s=M_\alpha$. Since, for physiological reasons, we further must have $\mu_\alpha(\psi_\alpha)<\nu_\alpha\psi_\alpha$, it follows $M_\alpha\in(0,\,\psi_\alpha)$. In this case, there is one nontrivial equilibrium given by
\begin{equation}
	\phi_M=M_\alpha, \qquad \phi_\alpha=\psi_\alpha-M_\alpha-
		H_{\epsilon_\alpha}^{-1}\left(\frac{\delta_\alpha+
			\delta'_\alpha H_{\epsilon_M}(m_\alpha-M_\alpha)}{\gamma_\alpha(M_\alpha)}\right),
	\label{eq:equilH}
\end{equation}
which is readily checked to be stable. Notice that the function $H_{\epsilon_\alpha}^{-1}(s)$ is well defined for $s\in(0,\,1)$.

The trivial equilibrium with also $\phi_\alpha=0$ can be reached basically in two situations. The first one is when $\phi_M$ is initially too small, so that the growth rate of the cells is lower than the apoptotic rate and anoikis occurs. This corresponds to initial conditions located in the lower-left corner of the phase portrait illustrated in Fig. \ref{fig:phaseport}, left. The equation $\phi_\alpha=\phi_\alpha(\phi_M)$ of the curve delimiting this basin of attraction in the phase space is obtained by integrating
\begin{equation}
	\frac{d\phi_\alpha}{d\phi_M}=\frac{\gamma_\alpha(\phi_M)H_{\epsilon_\alpha}(\psi_\alpha-\phi_\alpha-\phi_M)
		-\delta_\alpha-\delta'_\alpha H_{\epsilon_M}(m_\alpha-\phi_M)}{\mu_\alpha(\phi_M)-\nu_\alpha\phi_M}
	\label{eq:ode-phaseport}
\end{equation}
with the condition $\phi_\alpha(\phi_{M\alpha}^\star)=0$, $\phi_{M\alpha}^\star\in(0,\,1)$ being the smaller root of the equation $\Gamma_\alpha(\phi_M)=0$ with $\phi_T=\phi_H=0$, cf. Eq. \eqref{eq:Gamma}, characterized by $\Gamma_\alpha'(\phi_{M\alpha}^\star)>0$. This region does not exist if $\phi_{M\alpha}^\star<0$.

\begin{figure}[t]
\begin{center}
\includegraphics[width=0.3\textwidth,clip]{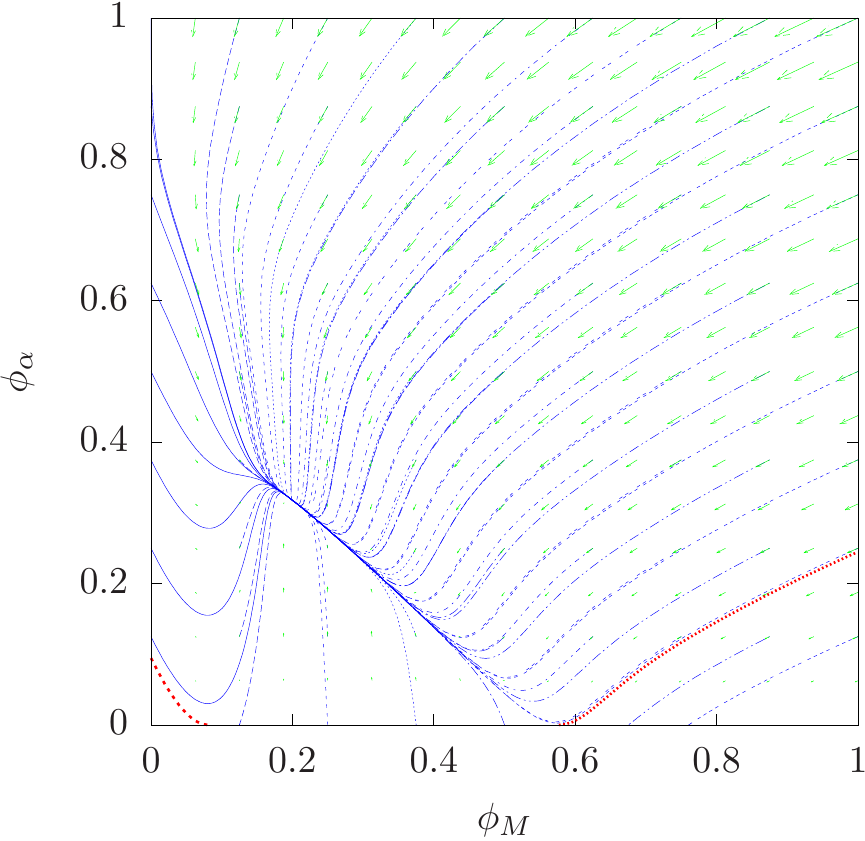} \qquad
\includegraphics[width=0.3\textwidth,clip]{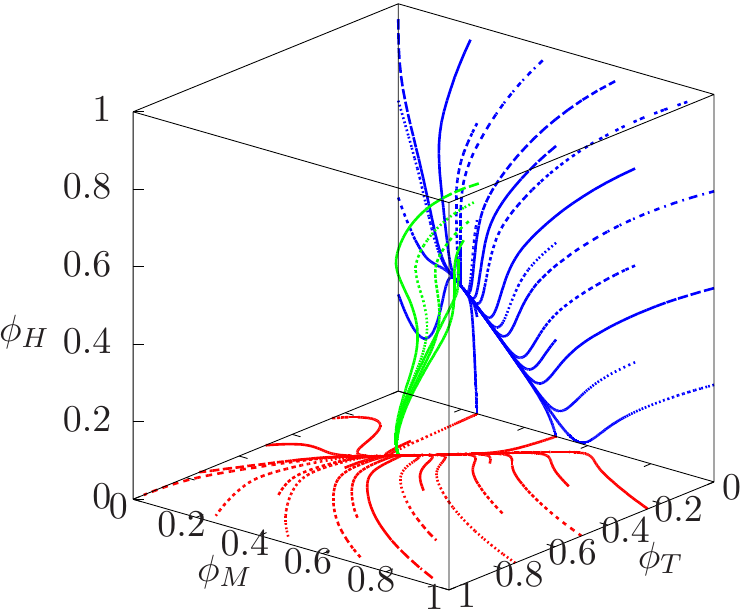}
\end{center}
\caption{Left: phase portrait in the fully physiological or pathological case. Right: phase portrait for the full model.}
\label{fig:phaseport}
\end{figure}

The second case is when $\phi_M$ is initially too large, namely the ECM is overly dense and cells are so compressed that the growth rate is again lower than the apoptotic rate because $H_{\epsilon_\alpha}(\psi_\alpha-\phi_\alpha-\phi_M)\approx 0$. This corresponds to initial conditions located in the lower-right corner of the phase portrait depicted in Fig. \ref{fig:phaseport}, left. The curve delimiting the basin of attraction is again obtained by integrating Eq. \eqref{eq:ode-phaseport}, now with the condition $\phi_\alpha(\phi_{M\alpha}^{\star\star})=0$, $\phi_{M\alpha}^{\star\star}\in(0,\,1)$ being the larger root of the equation $\Gamma_\alpha(\phi_M)=0$ with $\phi_T=\phi_H=0$, so that $\phi_{M\alpha}^\star\leq\phi_{M\alpha}^{\star\star}$. In this case $\Gamma_\alpha'(\phi_{M\alpha}^{\star\star})<0$. The region does not exist if $\phi_{M\alpha}^{\star\star}>1$.

In order to get the complete picture, we further have to investigate whether a nontrivial equilibrium with $\phi_T,\,\phi_H>0$ may exist. For this, we recall that the duplication/death rates $\Gamma_\alpha$ are constructed so as to match the biological requirement $\Gamma_T(\phi_M,\,\psi)>\Gamma_H(\phi_M,\,\psi)$ for all $(\phi_M,\,\psi)\in[0,\,1]\times[0,\,1]$, cf. Eq. \eqref{eq:Gamma-ineq}. As a consequence, we see that it is impossible for the right-hand sides of the two first equations of problem \eqref{eq:ode} to vanish simultaneously at an equilibrium point $(\phi_T,\,\phi_H,\,\phi_M)$ with $\phi_T,\,\phi_H\ne 0$, for this would imply that there exist $\phi_M\in[0,\,1]$ and $\psi\in(0,\,1]$ such that $\Gamma_\alpha(\phi_M,\,\psi)=0$ for both $\alpha=T,\,H$, which contradicts the above-mentioned Eq. \eqref{eq:Gamma-ineq}.

Hence, the only possible equilibria of the system are those arising in the fully physiological or pathological situation. In addition, condition \eqref{eq:Gamma-ineq} makes the nontrivial physiological equilibrium unstable and the nontrivial pathological one stable, as it can be realized from the three-dimensional phase portrait shown in Fig. \ref{fig:phaseport}, right.

\vskip0.3cm

\begin{figure}[t]
\begin{center}
\includegraphics[width=0.3\textwidth,clip]{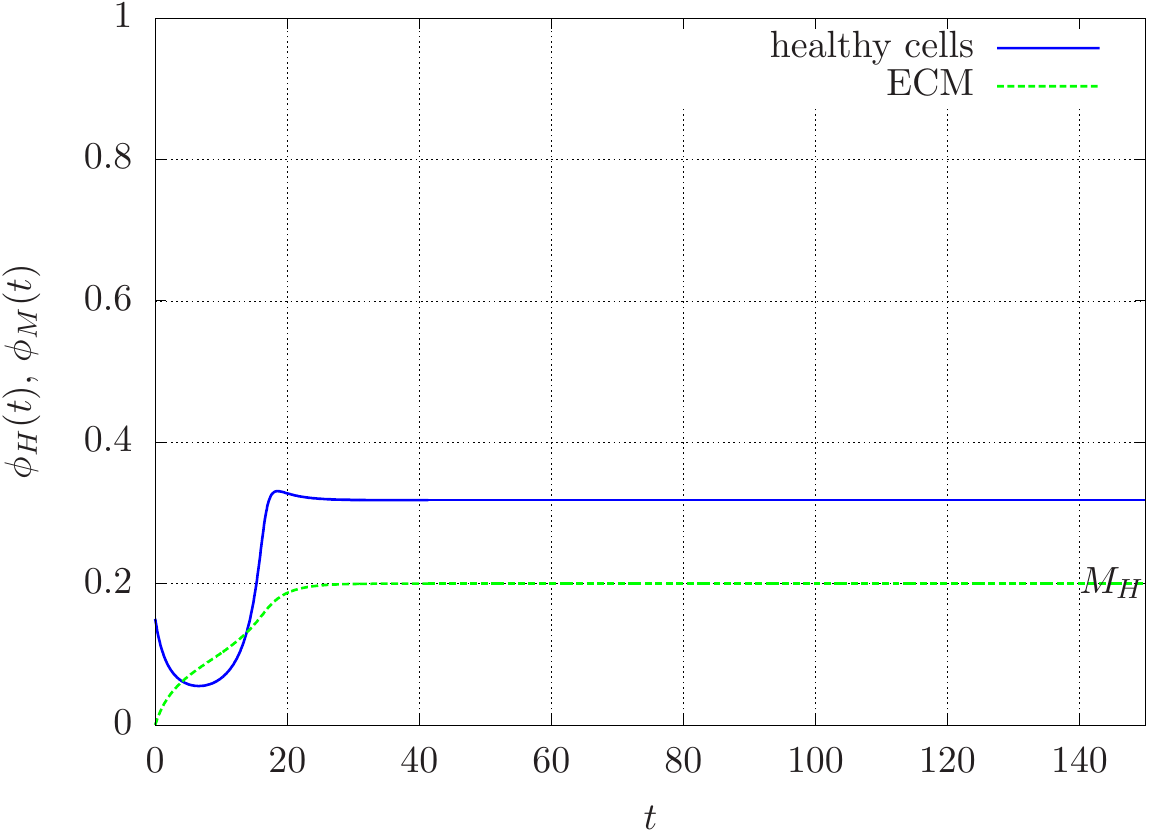} \qquad
\includegraphics[width=0.3\textwidth,clip]{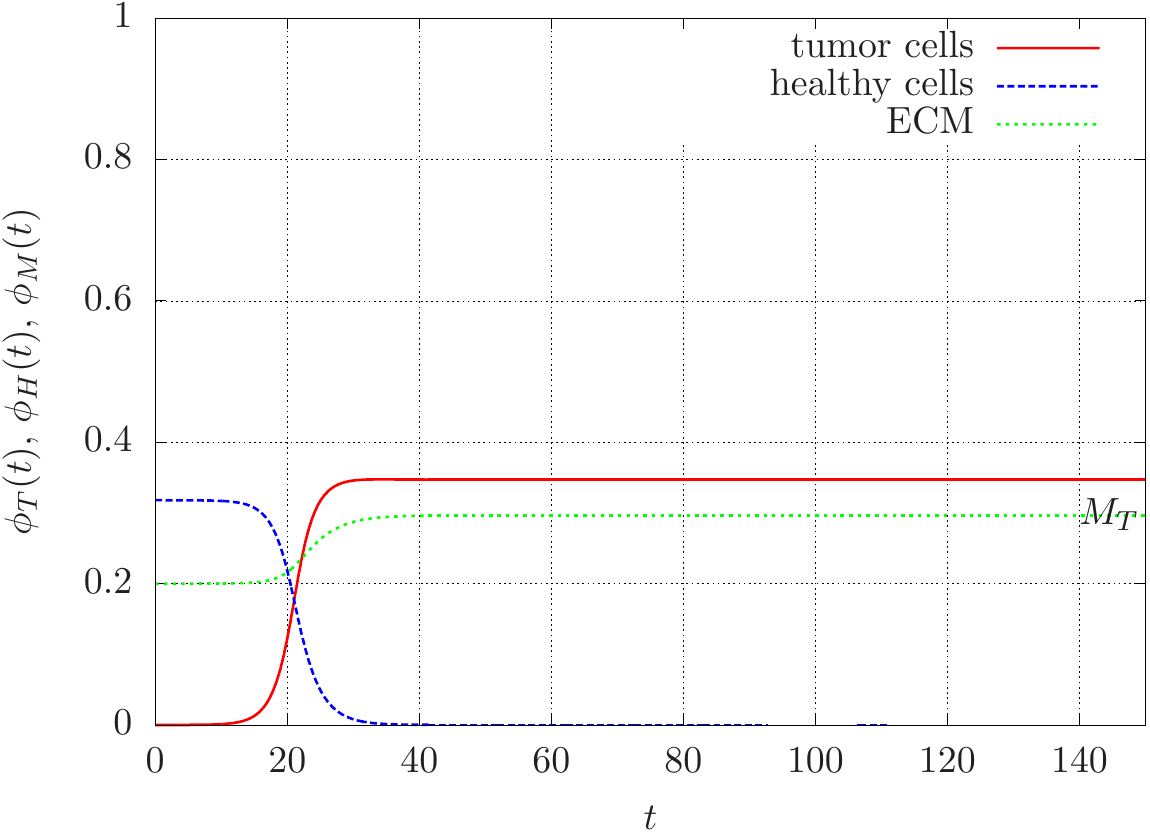}		
\caption{Left: formation of normal tissue in the physiological case. Right: formation of hyperplastic fibrotic tissue due to a small initial amount of tumor cells.}
\label{fig:evolution}
\end{center}
\end{figure}

Figure \ref{fig:evolution} (left) shows an example of a temporal evolution of the system giving rise to the formation of normal tissue in the fully physiological case. The initial death of healthy cells is due to anoikis, indeed cells are seeded in an environment completely deprived of ECM, that they have to build fast enough. The decrease stops as soon as the amount of ECM produced is such that $\gamma_H(\phi_M)H_{\epsilon_H}(\psi_H-\psi)\geq\delta_H+\delta'_HH_{\epsilon_M}(m_H-\phi_M)$, then the number of cells starts increasing, eventually leading to the stationary solution predicted by Eq. \eqref{eq:equilH} for $\alpha=H$. Conversely, if the initial amount of cells is insufficient to produce ECM rapidly enough then the entire population will die.

Figure \ref{fig:evolution} (right) gives instead an example of a complete temporal history ending with the formation of hyperplastic and fibrotic tissue. Despite the initial conditions $\phi_H^0,\,\phi_M^0$ coincide with the equilibrium values reached after the formation of normal tissue, the presence of a small amount of tumor cells ($\phi_T^0>0$) changes dramatically the outcome of the evolution, leading to a full depletion of healthy cells. This evolution can be duly compared with that shown in Fig. \ref{fig:spatinhomog}, which refers to the simulation of the full spatial and temporal model in one space dimension, cf. Eqs. \eqref{eq:celleq}, \eqref{eq:ecmeq}. Starting from the same initial conditions as in the spatially homogeneous case, the presence of a small amount of tumor cells at the beginning generates a traveling wave, which progressively depletes healthy cells and produces further fibrotic matrix while invading the normal tissue.

\begin{figure}[t]
\begin{center}
\includegraphics[width=0.32\textwidth,clip]{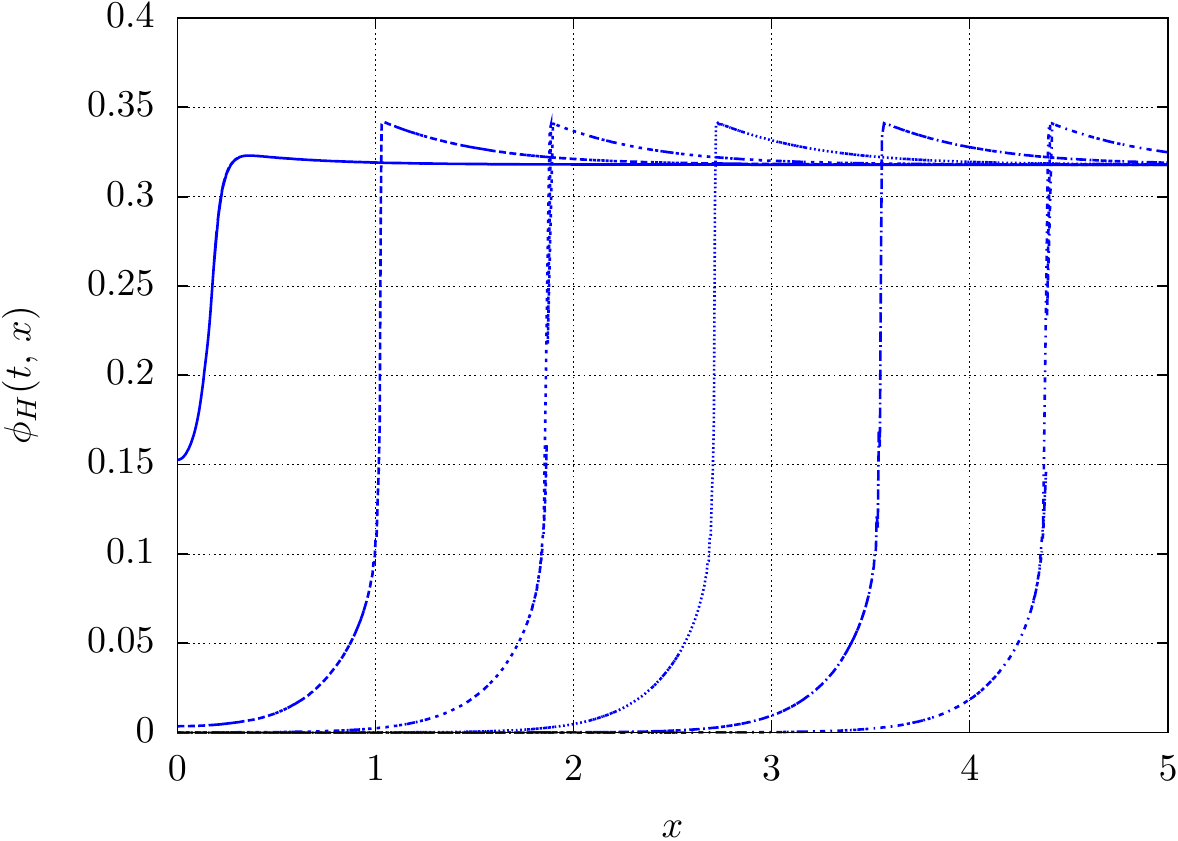}
\includegraphics[width=0.32\textwidth,clip]{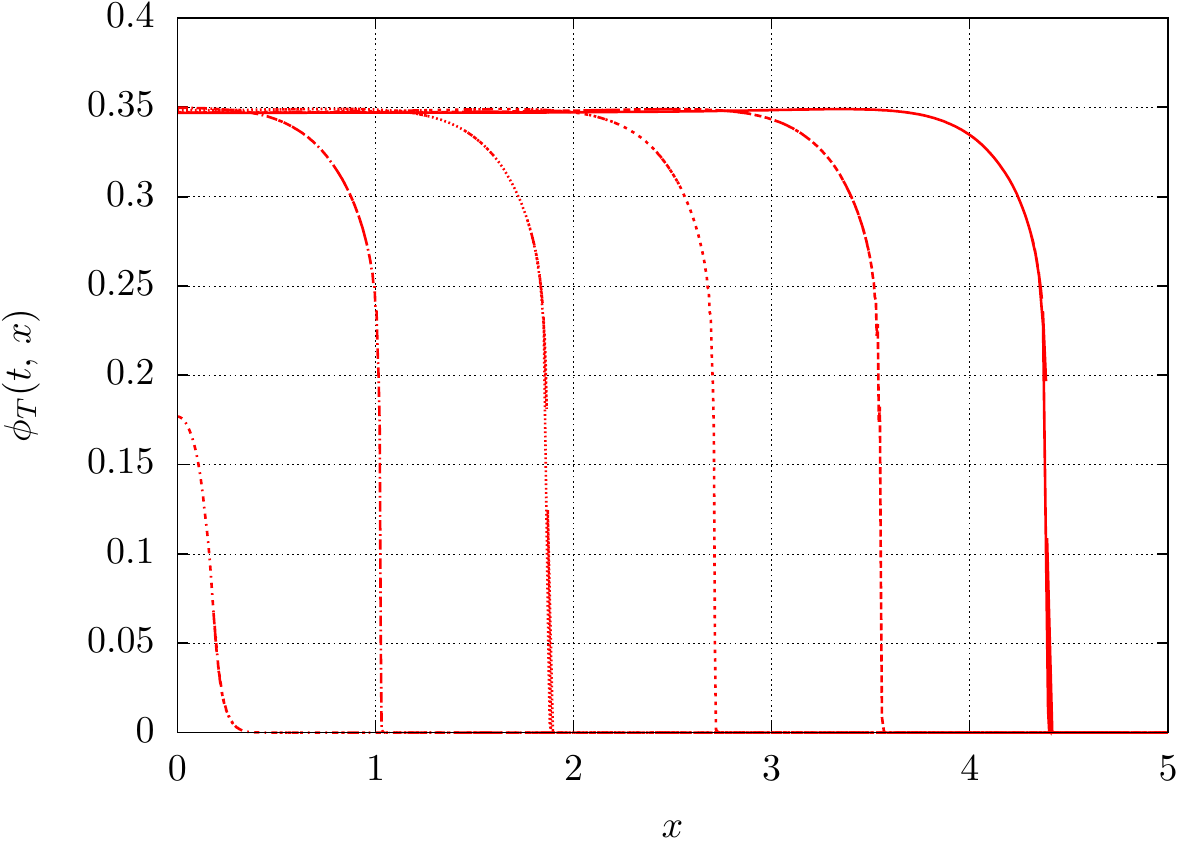}
\includegraphics[width=0.32\textwidth,clip]{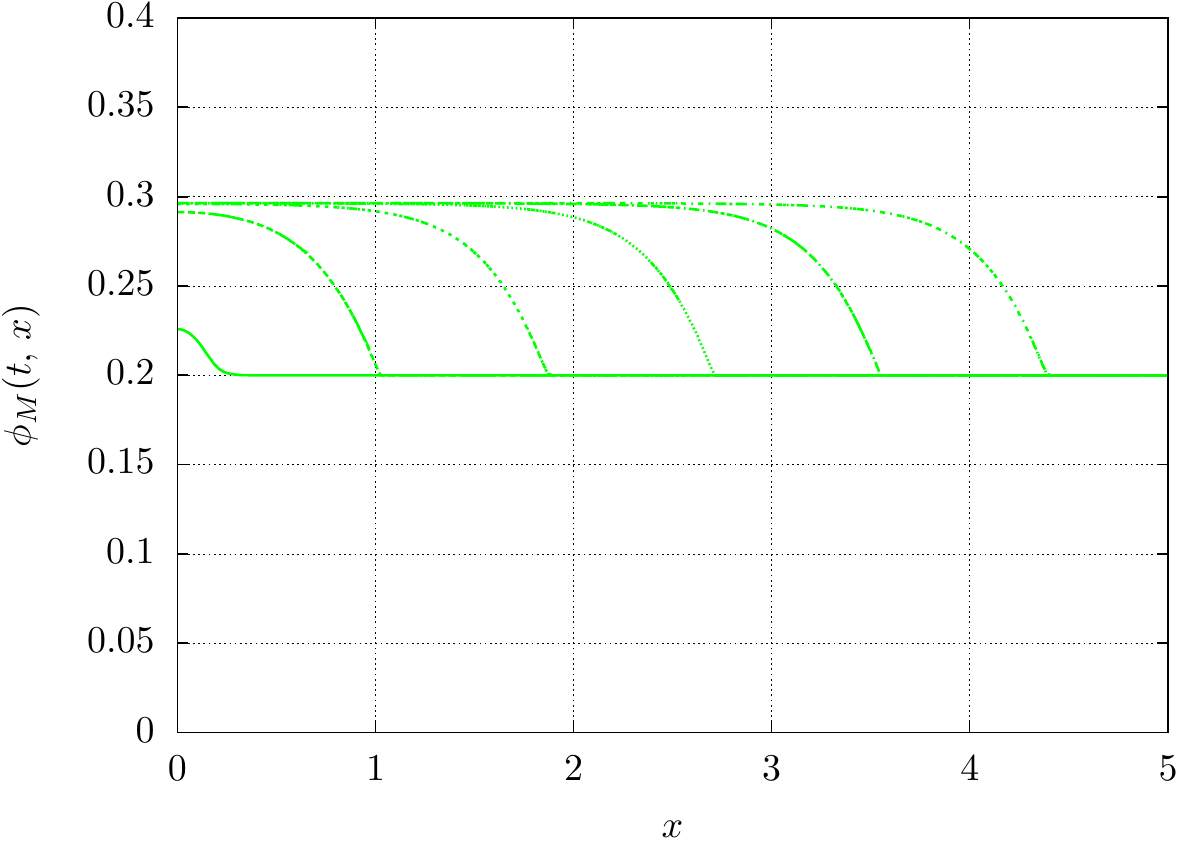}
\caption{Traveling wave solutions for $\phi_H$ (blue), $\phi_T$ (red), $\phi_M$ (green) in the full spatial and temporal evolution.}
\label{fig:spatinhomog}
\end{center}
\end{figure}


\begin{thebibliography}{10}
\expandafter\ifx\csname url\endcsname\relax
  \def\url#1{\texttt{#1}}\fi
\expandafter\ifx\csname urlprefix\endcsname\relax\def\urlprefix{URL }\fi
\expandafter\ifx\csname href\endcsname\relax
  \def\href#1#2{#2} \def\path#1{#1}\fi

\bibitem{paszek2005tensional}
M.~J. Paszek, N.~Zahir, K.~R. Johnson, J.~N. Lakins, G.~I. Rozenberg, A.~Gefen,
  C.~A. Reinhart-King, S.~S. Margulies, M.~Dembo, D.~Boettiger, D.~A. Hammer,
  V.~M. Weaver, Tensional homeostasis and the malignant phenotype, Cancer Cell
  8~(3) (2005) 241--254.

\bibitem{butcher2009tense}
D.~T. Butcher, T.~Alliston, V.~M. Weaver, A tense situation: forcing tumour
  progression, Nat. Rev. Cancer 9~(2) (2009) 108--122.

\bibitem{kass2007mammary}
L.~Kass, J.~T. Erler, M.~Dembo, V.~M. Weaver, Mammary epithelial cell:
  influence of extracellular matrix composition and organization during
  development and tumorigenesis, Int. J. Biochem. Cell B. 39~(11) (2007)
  1987--1994.

\bibitem{takeuchi1976variation}
J.~Takeuchi, M.~Sobue, E.~Sato, M.~Shamoto, K.~Miura, S.~Nakagaki, Variation in
  glycosaminoglycan components of breast tumors, Cancer Res. 36~(7 Pt 1) (1976)
  2133--2139.

\bibitem{zhang2003characteristics}
Y.~Zhang, S.~Nojima, H.~Nakayama, Y.~Jin, H.~Enza, Characteristics of normal
  stromal components and their correlation with cancer occurrence in human
  prostate, Oncol. Rep. 10~(1) (2003) 207--211.

\bibitem{johnson2001role}
P.~R. Johnson, Role of human airway smooth muscle in altered extracellular
  matrix production in asthma, Clin. Exp. Pharmacol. P. 28~(3) (2001) 233--236.

\bibitem{chiquet1996regulation}
M.~Chiquet, M.~Koch, M.~Matthisson, M.~Tannheimer, R.~Chiquet-Ehrismann,
  Regulation of extracellular matrix synthesis by mechanical stress, Biochem.
  Cell. Biol. 74~(6) (1996) 737--744.

\bibitem{dejana1990fibrinogen}
E.~Dejana, M.~G. Lampugnani, M.~Giorgi, M.~Gaboli, P.~C. Marchisio,
  Fibrinogen induces endothelial cell adhesion and spreading via the release
  of endogenous matrix proteins and the recruitment of more than one integrin
  receptor, Blood 75~(7) (1990) 1509--1517.

\bibitem{kim2002gene}
S.~G. Kim, T.~Akaike, T.~Sasagaw, Y.~Atomi, H.~Kurosawa, Gene expression of
  type i and type iii collagen by mechanical stretch in anterior cruciate
  ligament cells, Cell Struct. Funct. 27~(3) (2002) 139--144.

\bibitem{kjaer2004role}
M.~Kj{\ae}r, {Role of extracellular matrix in adaptation of tendon and skeletal
  muscle to mechanical loading}, Physiol. Rev. 84~(2) (2004) 649--698.

\bibitem{mao2004growth}
J.~J. Mao, H.-D. Nah, Growth and development: hereditary and mechanical
  modulations, Am J. Orthod. Dentofac. 125~(6) (2004) 676--689.

\bibitem{berk2007ecm}
B.~C. Berk, K.~Fujiwara, S.~Lehoux, {ECM} remodeling in hypertensive heart
  disease, J. Clin. Invest. 117~(3) (2007) 568--575.

\bibitem{brewster1990myofibroblasts}
C.~E. Brewster, P.~H. Howarth, R.~Djukanovic, J.~Wilson, S.~T. Holgate, W.~R.
  Roche, {Myofibroblasts and
  subepithelial fibrosis in bronchial asthma}, Am. J. Respir. Cell Mol. Biol.
  3~(5) (1990) 507--511.

\bibitem{iredale2007models}
J.~P. Iredale, Models of liver fibrosis: exploring the dynamic nature of
  inflammation and repair in a solid organ, J. Clin. Invest. 117~(3) (2007)
  539--548.

\bibitem{liotta2001microenvironment}
L.~A. Liotta, E.~C. Kohn, The microenvironment of the tumour-host interface,
  Nature 411 (2001) 375--379.

\bibitem{pinzani2000liver}
M.~Pinzani, Liver fibrosis, in: Springer seminars in immunopathology, Vol.~21,
  Springer, 2000, pp. 475--490.

\bibitem{hautekeete1997hepatic}
M.~L. Hautekeete, A.~Geerts, The hepatic stellate (ito) cell: its role in human
  liver disease, Virchows Arch. 430~(3) (1997) 195--207.

\bibitem{ruiter2002melanoma}
D.~Ruiter, T.~Bogenrieder, D.~Elder, M.~Herlyn, {Melanoma-stroma interactions:
  structural and functional aspects}, The Lancet Oncol. 3~(1) (2002) 35--43.

\bibitem{chaplain2006mml}
M.~A.~J. Chaplain, L.~Graziano, L.~Preziosi, Mathematical modelling of the loss
  of tissue compression responsiveness and its role in solid tumour
  development, Math. Med. Biol. 23~(3) (2006) 197--229.

\bibitem{MR2471305}
L.~Preziosi, A.~Tosin, Multiphase modelling of tumour growth and extracellular
  matrix interaction: mathematical tools and applications, J. Math. Biol.
  58~(4-5) (2009) 625--656.

\bibitem{ambrosi2008cam}
D.~Ambrosi, L.~Preziosi, Cell adhesion mechanisms and stress relaxation in the
  mechanics of tumours, Biomech. Model. Mechanobiol. 8~(5) (2008) 397--413.

\bibitem{preziosi2009evp}
L.~Preziosi, D.~Ambrosi, C.~Verdier, An elasto-visco-plastic model of cell
  aggregates, J. Theor. Biol.To appear.

\bibitem{MR2253816}
R.~P. Araujo, D.~L.~S. McElwain, A history of the study of solid tumour growth:
  the contribution of mathematical modelling, Bull. Math. Biol. 66~(5) (2004)
  1039--1091.

\bibitem{preziosi2009mmt}
L.~Preziosi, A.~Tosin, Multiphase and multiscale trends in cancer modelling,
  Math. Model. Nat. Phenom. 4~(3) (2009) 1--11.

\bibitem{tracqui2009bmt}
P.~Tracqui, Biophysical models of tumour growth, Rep. Prog. Phys. 72~(5) (2009)
  056701 (30pp).

\bibitem{barker2000cws}
N.~Barker, H.~Clevers, Catenins, {W}nt signaling and cancer, BioEssays 22~(11)
  (2000) 961--965.

\end{thebibliography}

\end{document}